\definecolor{darkgreen}{rgb}{0,0.4,0}
\newcommand{\todo}[1]{}
\theoremstyle{plain}
\newtheorem{theorem}{Theorem}
\theoremstyle{definition}
\newcommand{\str}[1]{{\mathfrak{#1}}}
\newcommand{\naw}[1]{{\langle #1\rangle}}
\newcommand{\PSPACE}{{\sc PSpace}}
\newcommand{\NP}{{\sc NP}}
\newcommand{\EXPTIME}{{\sc ExpTime}}
\newcommand{\klasa}{{\cal K}}
\def\blfootnote{\xdef\@thefnmark{}\@footnotetext}
\author[1]{Jakub Michaliszyn}
\affil[1]{University Of Wroc\l{}aw}
\title{Elementary Multimodal Logics}
\authorrunning{Jakub Michaliszyn}
\subjclass{F.4.1 Mathematical Logic}
\keywords{Modal logic, Transitive frames, Elementary modal logics, Decidability}
\title{Elementary Multimodal Logics}
\begin{document}

\maketitle
  \begin{abstract}
We study multimodal logics over universally first-order definable classes of frames.
We show that even for bimodal logics, there are universal Horn formulas that define set of frames such that the satisfiability problem is undecidable, even if one or two of the binary relations are transitive.

  \end{abstract}
\blfootnote{This work has been supported by Polish National Science Center grant UMO-2014/15/D/ST6/00719.}

\section{Introduction}
Multimodal logic extends propositional logic by unary operators: $\Diamond_1, \dots, \Diamond_m$ and dual operators $\square_1, \dots, \square_m$. The formal semantics is given in terms of Kripke structures. 
A Kripke structure is a structure consisting of a (possibly infinite) number of \emph{worlds} that may be connected by binary relations $R_1, \dots, R_m$, called a \emph{frame}, together with a valuation of propositional variables in each of the worlds.  
In this semantics, $\Diamond_i \varphi$ means  \emph{the current world is connected by $R_i$ to some world in which $\varphi$ is true}; and $\square_i \varphi$, equivalent to $\neg \Diamond_i \neg \varphi$, means \emph{$\varphi$ is true in all worlds to which the current world is connected by $R_i$}.  

Multimodal logics are often considered with some addidional constraints on the interpretation of operators, e.g. by requiring that the modal operator represents a relation that is reflexive and transitive (S4). 
In this way, we can define logics with nonuniform  modal operators, like Linear Temporal Logic (LTL), Computation Tree Logic (CTL) or Halpern--Shoham logic (HS).

Variants of multimodal logic vary in the complexity and the decidability of the satisfiability problem. Some multimodal logics are \NP{}-complete (S5), some are \PSPACE{}-complete (unimodal logic, LTL),  so are \EXPTIME{}-complete (CTL) and 2-\EXPTIME{}-complete (CTL${}^*$).
 Finally, the Halpern--Shoham logic is a simple example of a temporal logic that is undecidable, even if we consider some unimodal fragments \cite{BD11,MM11}.

In this paper we consider adding constraints on the interpretation of operators by defining the class of frames by a universal first order logic sentence that uses binary relations $R_1, \dots, R_n$. For example, the sentence $\forall x y z.(xR_1y \wedge yR_1z \Rightarrow xR_1z) \wedge (x R_2 y \Rightarrow x R_1 y)$ defines the class of all the frames where $R_1$ is a transitive relation containing $R_2$. Multiodal logic over a class of frames definable by a first order logic sentence is called \emph{an elementary multimodal logic}.

Interesting positive results regarding the decidability of various elementary \emph{unimodal} logic (i.e., a logic whose frames have only one binary relation) \cite{MichaliszynOtop15, csl13} give us hope that elementary multimodal logics can be used in verification, and, in particular, in synthesis. 

In this paper we focus on the synthesis problem, in which, for a given specification, the goal is to construct a system that satisfies the specification. The specification can be provided by any multimodal logic. However, most of the popular multimodal logic cannot express some possibly desired properties of the system, e.g., state that the system has to be deterministic (only one successor from each state), transitive (i.e., can do several steps at one), reflexive (i.e., can stay in a single state for a long time, perhaps because of not getting CPU time to proceed) etc. It makes therefore sense to consider a variant of the synthesis problem where properties like these can be easily stated. Elementary multimodal logics fit perfectly into this scenario.

Many modal logics used in automatic verification contain operators that are interpreted as transitive relations. For example, Linear Temporal Logic (LTL) contains transitive operators $F$ and $G$ \cite{modal}. In epistemic modal logics, the knowledge operators $K_i$ are interpreted as relations that are not only transitive, but also reflexive and symmetric \cite{F95}. Another example is the logic of subintervals \cite{MM11}, which is a fragment of Halpern--Shoham logic with a single modality that can be read as ``in some subinterval''.
\smallskip

{\bf Main results.}
In \cite{HS11}, it was shown that there is an universal first-order formula such that the global satisfiability problem of unimodal logic over the class of frames that satisfy this formula is undecidable. A slight modification of that formula yields an analogous result for the local satisfiability problem. In \cite{KMO11} it was shown that even a very simple formula with three variables without equality leads to undecidability. On the positive side, in \cite{csl13}, it was shown that universal formulas that imply that the (only) binary relation is transitive lead to decidable unimodal logic, and in \cite{MichaliszynOtop15} a similar result was shown for universal Horn formulas.

In this paper, we show that the positive results mentioned above do not extend to elementary multimodal logics. We show that elementary multimodal logics may be undecidable even if we consider only universal Horn formulas and two relations, regardless of whether we assume that any of the relations is transitive.

\section{Preliminaries}
Formulae of a multimodal logic are interpreted in  Kripke structures, which are triples of the form $\langle M, R_1, \dots, R_n, \pi \rangle$, where $M$ is a set of worlds, $R_1, 
\dots, R_n$ are binary relations on $M$,  $\langle M, R_1, \dots, R_n \rangle$ is called a \emph{frame}, and $\pi$ is \emph{a labelling}, a function that assigns to each world a set of propositional variables which are true at this world. We say that a structure  $\langle M, R, \pi \rangle$ is \emph{based} on the frame $\langle M, R \rangle$. For a given class of frames $\cal K$, we say that a structure is $\cal K$-based if it is based  on some frame from $\cal K$. We will use calligraphic letters ${\cal M}, {\cal N}$ to denote frames and Fraktur letters $\str{M}, \str{N}$ to denote structures. Whenever we consider a structure $\str M$, we assume that its frame is $\cal M$ and its universe is $M$ (and the same holds for other letters). 

The semantics of a multimodal logic is defined recursively. A modal formula $\varphi$ is (locally) \emph{satisfied} in a world $w$ of a model $\str M=\naw{M, R_1, 
dots, R_n, \pi}$,
denoted as
${\str  M}, w \models \varphi$ if
\begin{itemize}
\item $\varphi = p$, where  $p$ is a variable, and $\varphi \in \pi(w)$,
\item $\varphi = \neg \varphi'$ and ${\str  M}, w \not \models \varphi'$,
\item $\varphi = \varphi_1 \wedge \varphi_2$ and ${\str  M}, w \models \varphi_1$ and ${\str  M}, w \models \varphi_2$,
\item $\varphi = \Diamond_i \varphi'$ and there is a world $v\in M$ such that $(w, v) \in R_i$ and ${\str  M}, v \models \varphi'$,
\end{itemize}

Boolean connectives $\vee, \Rightarrow, \Leftrightarrow$ and constants $\top, \bot$ are introduced in the standard way. We abbreviate $\neg \Diamond_i \neg \varphi$ by $\square_i \varphi$. 
By $|\varphi|$ we denote the length of $\varphi$.

A formula $\varphi$ is \emph{globally} satisfied in ${\str  M} $, denoted as ${\str  M} \models \varphi$, if for all worlds $w$ of ${\str  M}$,  we have ${\str  M}, w \models \varphi$. 

For a given class of frames $\klasa$, we say that a formula $\varphi$ is \emph{locally} (resp.~\emph{globally}) ${\klasa}$-\emph{satisfiable} if there exists a $\klasa$-based structure ${\str M}$, and a world $w \in W$ such that ${\str M}, w \models \varphi$ (resp.~${\str M} \models \varphi$).  We study four versions of the satisfiability problem.

For a given formula $\varphi$, a Kripke structure $\str M$, and a world $w \in M$ we define the \emph{type} of $w$ (with respect to $\varphi$) in $\str M$ as $tp_{\str  M}^{\varphi}(w) = \{ \psi : {\str  M}, w \models \psi $ and $\psi$ is a subformula of $\varphi \}$. We write $tp_{\str  M}(w)$ if the formula is clear from the context. Note that $|tp_{{\str  M}}^{\varphi}(w)| \leq |\varphi|$. 

The class of 
 (equality-free) 
universal first order sentences is defined as a subclass of first--order sentences such that each sentence is of the form $\forall \vec{x} \,\Psi(\vec{x})$, where $\Psi(\vec{x})$ is quantifier--free formula over the language $\{ R_1, \dots, R_n \}$, where each $R_i$ is a binary relation symbol.

The \emph{local} (resp.~\emph{global}) \emph{satisfiability problem} $\klasa$-SAT (resp.~${\klasa}$-GSAT) as follows. Is a given modal formula locally (resp.~globally) ${\klasa}$-satisfiable? 
 The \emph{finite local} (global) satisfiability problem, $\klasa$-FINSAT ($\klasa$-GFINSAT), is defined in the same way, but we are only interested in finite models (the class $\klasa$ may still contain infinite structures). 

\section{Decidability of elementary multimodal logics}
Let $UFO(n; m)$ be the set of all the universal first order formulas $\Phi$ with binary relations $R_1, \dots, R_{n+m}$ such that in all the models of $\Phi$ the relations $R_{n+1}, \dots, R_{n+m}$ are transitive. 
Let $UHF(n; m)$ be a subset of $UFO(n; m)$ that consists of the Horn formulas.
We show the following.

\begin{theorem}
For any $n, m$ such that $n+m>1$, there is a $UHF(n; m)$ formula $\Phi$ such that the finite satisfiability, the finite global satisfiability, the satisfiability and  the global satisfiability of multimodal logic over $\Phi$ are undecidable.
\end{theorem}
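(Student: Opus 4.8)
The plan is to prove undecidability by reducing from a well-known undecidable problem — the halting problem for two-counter (Minsky) machines, or equivalently the non-halting/recurrence problem, since we need to handle both the finite and the infinite versions of satisfiability simultaneously. The overall strategy is to build, for a fixed $UHF(n;m)$ formula $\Phi$, a modal formula $\psi_{\cal A}$ (for an input two-counter machine $\cal A$) such that $\psi_{\cal A}$ is (globally/locally, finitely or not) $\klasa_\Phi$-satisfiable iff $\cal A$ has an appropriate computation. Since $n+m>1$, we have at least two binary relations at our disposal, so the first task is to choose $\Phi$ so that in every model one of the relations behaves like a grid-generating ``horizontal successor'' relation and the other like a ``vertical successor'' relation, forcing a grid-like structure on which the tiling of a Minsky-machine computation can be encoded. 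Because we must cover all four cases of the theorem (the relations $R_{n+1},\dots,R_{n+m}$ are forced transitive, so depending on whether the grid relations are among the transitive ones we must be careful), I would actually give a handful of closely related $\Phi$'s — one per regime of $n,m$ — all of the same Horn shape, and argue uniformly.

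First I would fix the building block: a universal Horn sentence over two relations, say $R$ and $S$, that forces a confluence/grid property. A standard trick is to use implications of the form $\forall xyz\,(xRy \wedge xSz \Rightarrow \exists \dots)$ — but universal Horn formulas cannot assert existence, so instead the existential witnesses will be supplied by the modal formula $\psi_{\cal A}$ via $\Diamond$'s, while $\Phi$ enforces the purely universal ``closure'' constraints such as $\forall xyzw\,(xRy \wedge yS w \wedge xSz \wedge z R w' \Rightarrow \dots)$ identifying the two ways of reaching a common corner, or more simply propagating a ``defect'' marker so that confluence failures are detectable by the modal formula. Concretely: the modal formula says ``every world has an $R$-successor and an $S$-successor, each carrying the right tile/counter-value information,'' and $\Phi$ together with a few modal $\square$-assertions forces that following $R$ then $S$ lands in the ``same place'' as following $S$ then $R$, in the sense that the types agree. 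This is the classical grid encoding underlying undecidability of products of modal logics and of many description logics; the novelty here is squeezing it into universal Horn form with the transitivity side conditions.

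The next step is to encode a two-counter machine on this grid. One axis indexes time (configurations $c_0, c_1, c_2, \dots$), the other indexes counter cells; propositional variables mark the machine state, the head/zero-test information, and the increment/decrement of each counter between consecutive configurations. The modal formula $\psi_{\cal A}$ then states: the initial world encodes the initial configuration of $\cal A$; the local transition rules of $\cal A$ are respected between $R$-adjacent configurations (expressed with $\Diamond,\square$ over the vertical/horizontal relations); and a ``halting'' (for the finite case) or ``recurring state'' (for the infinite case) condition. For the finite satisfiability versions, non-halting runs are ruled out because a finite grid cannot sustain an infinite computation unless the machine halts; for the (unbounded) satisfiability versions one uses the complementary reduction from non-halting. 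Handling global vs.\ local satisfiability is routine: global satisfiability lets us impose the grid and transition axioms everywhere for free, and for local satisfiability we add a ``spy point'' or use the transitive relation's closure to reach every world from the evaluation point, then relativize all the global axioms under a $\square$ over that reachability relation — here having a transitive $R_{n+i}$ available is actually helpful rather than harmful, which is why the transitive cases also go through.

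The main obstacle I expect is the interaction between the Horn restriction and the transitivity side conditions. Universal Horn formulas are weak — they can only propagate positive information — so forcing a rigid grid (where the two composition orders genuinely coincide, not merely fail to produce a detectable defect) is delicate; one typically settles for a ``pseudo-grid'' and lets the modal formula detect and forbid the bad cases, and one must check this still works when, say, the horizontal relation is required to be transitive (so $R$-edges compose, collapsing the intended cell structure) — in that regime I would instead let the transitive relation play the role of a reachability/spy relation and use the non-transitive relation(s) for the grid, which is why the hypothesis $n+m>1$ (rather than $m>1$ alone) is exactly what is needed. Verifying that a single Horn $\Phi$ (or a small uniform family) simultaneously yields undecidability of all four problems, across all splittings of $n+m$, is the part that will require the most care, but each individual reduction is a variation on the classical grid/tiling argument.
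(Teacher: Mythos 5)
Your overall strategy---a grid encoding in which the universal Horn sentence supplies only edge-propagation (``closure'') constraints while the modal formula supplies the existential witnesses, with a reduction from an undecidable halting-type problem---is indeed the general shape of the paper's argument, and using Minsky machines instead of the domino problem is immaterial. The genuine gap sits exactly where the theorem is hard: the regime in which the grid relations themselves are forced to be transitive. Your only fix for that regime, ``let the transitive relation play the role of a reachability/spy relation and use the non-transitive relation(s) for the grid,'' cannot cover the case $n=0$, $m\geq 2$, which the hypothesis $n+m>1$ explicitly includes: there \emph{every} relation must be transitive in every model of $\Phi$, so no non-transitive relation is available for the grid, and the naive grid collapses exactly as you fear (a transitive horizontal relation lets $\Diamond$ reach arbitrarily distant cells, destroying the local-successor encoding). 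Your remark that $n+m>1$ ``rather than $m>1$ alone is exactly what is needed'' suggests you read the hypothesis as guaranteeing a non-transitive relation; it does not---it only guarantees two relations in total.

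The paper resolves this with a device absent from your proposal: the two Horn grid axioms are \emph{guarded} by a trigger $\mathit{mid}_i(u)$ asserting that $u$ starts an $R_i$-path of length $2$, e.g.\ $\forall x,y,z,u,s,t\,(xR_1y \wedge xR_1u \wedge uR_1z \wedge uR_2s \wedge sR_2t \Rightarrow yR_2z)$ and its mirror image, and the intended models alternate $R_1$- and $R_2$-edges across the grid so that the only $R_1$- or $R_2$-paths of length $2$ occur inside small gadgets $U_{x,y},S_{x,y},T_{x,y}$ attached to each cell, which both supply the $\mathit{mid}_i$ triggers and are already transitively closed. Hence adding transitivity of $R_2$ (when $n<2$) and of $R_1$ (when $n=0$) neither destroys the intended models nor interferes with the grid-forcing axioms, and a single construction covers all splittings of $n+m$. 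Without some concrete device of this kind your plan does not produce a Horn $\Phi$ for, say, $UHF(0;2)$, so the theorem as stated is not established. Two smaller points: equality-free universal Horn sentences can only add edges, never identify worlds, so ``identifying the two ways of reaching a common corner'' must be replaced throughout by modally enforced type agreement (you gesture at this but the frame-level ``defect marker'' idea is not available); and the local-satisfiability case needs a concrete Horn conjunct rather than an appeal to transitive closure---the paper adds $\forall x,y,z,v\,(xR_1y \wedge xR_2y \wedge zR_1v \Rightarrow xR_1v)$ and prefixes the global formula with $\Diamond_1\top\wedge\Diamond_2\top\wedge\square_1\phi_D$.
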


\begin{proof}
We define $\Phi$ as the conjunction of the following formulas:
\begin{enumerate}
\item \(
\forall x, y, z, u, s, t . x R_1 y \wedge x R_1 u \wedge u R_1 z \wedge \mathit{mid_2}(u) \Rightarrow y R_2 z
\)
\item \(
\forall x, y, z, u, s, t . x R_2 y \wedge x R_2 u \wedge u R_1 z \wedge \mathit{mid_1}(u) \Rightarrow y R_1 z
\)
\item only if $n<2$: \(
\forall x, y, z . x R_2 y \wedge y R_2 z \Rightarrow x R_2 z
\) 
\item only if $n=0$: \(
\forall x, y, z . x R_1 y \wedge y R_1 z \Rightarrow x R_1 z
\) 
\end{enumerate}
where $\mathit{mid}_i(u)$ states that there is a path of length $2$ via $R_2$ starting from $u$: $u R_i s \wedge s R_i t$.

\usetikzlibrary{calc,matrix,arrows,math}
\tikzstyle{vertex}=[circle,draw=black,minimum size=30pt]
\tikzstyle{smallvertex}=[circle,draw=black,minimum size=15pt,inner sep=0pt]
\tikzstyle{dots}=[minimum size=15pt,inner sep=0pt]

\begin{figure}
\begin{tikzpicture}[scale=4]
\begin{scope}

\foreach \xx/\xxm in {0/0,1/1,2/2,3/0}
\foreach \yy/\yym in {0/0,1/1,2/2,3/0}
{
    \node[vertex] (P-\xx\yy) at (\xx, \yy) {$P_{\xxm, \yym}$};
}    

\foreach \xx/\xxm in {0/0,1/1,2/2}
\foreach \yy/\yym in {0/0,1/1,2/2}
{
    \node[smallvertex] (U-\xx\yy) at (\xx+0.5, \yy+0.5) {$U_{\xxm, \yym}$};
    \node[smallvertex] (S-\xx\yy) at (\xx+0.5, \yy+0.8) {$S_{\xxm, \yym}$};
    \node[smallvertex] (T-\xx\yy) at (\xx+0.2, \yy+0.8) {$T_{\xxm, \yym}$};
}    

    \draw[thick, ->] (P-00) to (P-01);

\foreach \xx/\yy[evaluate=\xx as \nxx using int(\xx+1),evaluate=\yy as \nyy using int(\yy+1)] in {0/0,1/1,2/2,0/2,2/0,1/1}
{
    \draw[thick, ->] (P-\xx\yy) to (P-\nxx\yy);
    \draw[thick, ->] (P-\xx\yy) to (P-\xx\nyy);
    \draw[thick, ->] (P-\xx\yy) to (U-\xx\yy);
    \draw[dashed, ->] (U-\xx\yy) to (P-\nxx\nyy);    
    \draw[dashed, ->] (U-\xx\yy) to (S-\xx\yy);    
    \draw[dashed, ->] (S-\xx\yy) to (T-\xx\yy);    
    \draw[dashed, ->] (U-\xx\yy) to (T-\xx\yy);    
    \draw[dashed, ->] (P-\nxx\yy) to (P-\nxx\nyy);    
    \draw[dashed, ->] (P-\xx\nyy) to (P-\nxx\nyy);    
}

\foreach \xx/\yy[evaluate=\xx as \nxx using int(\xx+1),evaluate=\yy as \nyy using int(\yy+1)] in {0/1,1/0,1/2,2/1}
{
    \draw[dashed, ->] (P-\xx\yy) to (P-\nxx\yy);
    \draw[dashed, ->] (P-\xx\yy) to (P-\xx\nyy);
    \draw[dashed, ->] (P-\xx\yy) to (U-\xx\yy);
    \draw[thick, ->] (U-\xx\yy) to (P-\nxx\nyy);  
    \draw[thick, ->] (U-\xx\yy) to (S-\xx\yy);    
    \draw[thick, ->] (S-\xx\yy) to (T-\xx\yy);    
    \draw[thick, ->] (U-\xx\yy) to (T-\xx\yy);      
    \draw[thick, ->] (P-\nxx\yy) to (P-\nxx\nyy);    
    \draw[thick, ->] (P-\xx\nyy) to (P-\nxx\nyy);    

}

\end{scope}
\end{tikzpicture}
 \caption{A model of $\Phi$.}\label{f:model}
\end{figure}

The crucial property of the formula $\Phi$ is that the structure presented in Figure~\ref{f:model}, as well as similar, larger ``grid-like'' structures, is a model of $\Phi$. It is worth to notice that the potential  transitivity  of $R_1$ and $R_2$ plays a minor role in this structure -- the longest $R_1$ and $R_2$ paths are of length $2$:  $U_{x,y}$, $S_{x,y}$, $T_{x,y}$, and there are no other such paths of length $2$.

Let $\klasa$ be the set of frames defined by $\Phi$. To show that the problems $\klasa$-GSAT and $\klasa$-GFINSAT are undecidable, one can encode the domino problem \cite{Berger66} in a standard manner (c.f. \cite{MichaliszynOtop15}).

To obtain the result for SAT and FINSAT, we define $\Phi'$ as the conjunction of $\Phi$ and the following conjunct:
\begin{enumerate}
\item[5.] $\forall x, y, z, v . x R_1 y \wedge x R_2 y \wedge z R_1 v \Rightarrow x R_1 v$
\end{enumerate}

Observe that the intended model of $\Phi$, exemplified in Figure~\ref{f:model} is also a model of $\Phi'$ (as every world has only $R_1$-successors or only $R_2$-successors), so the proof for GSAT and GFINSAT also works for $\Phi'$.

Now, to obtain a proof for the local satisfiability problem, we use the modal formula $\phi_D$ from the proof for the global satisfiability problem to define the formula $\phi_D'$ as follows:
\[
\Diamond_1 \top \wedge \Diamond_2 \top \wedge \square_1 \phi_D 
\]
It requires that the initial world has both a $R_1$-successor and a $R_2$-successor, which means that by $\Phi'$ it has to be $R_1$ or $R_2$-connected to every world of the model that has a predecessor. To conclude the result it is enough to see that the words without predecessors, except from the initial one, can be removed from the model without influencing the satisfaction of the formula.

\end{proof}

\section{Discussion and conclusions}\label{s:future}\label{conclusion}
We showed that elementary multimodal logics may be undecidable, even if we only  consider universal Horn formulas, two modalities and possibly require some relations to be transitive. It is worth to notice that our first order formulas include some non-trivial interaction between the relations $R_1$ and $R_2$ -- an interesting question is whether the problem becomes decidable if we forbid such relations, i.e., require that each conjunct uses only one relation.
\bibliographystyle{plain}
\bibliography{all,bib}
\end{document}